\begin{document}

\title{Efficient Simulation of Quantum Secure Multiparty Computation
}


\author{Kartick Sutradhar      
}


\institute{Kartick Sutradhar \at
              Indian Institute of Information Technology Sri City \\
              \email{kartick.sutradhar@gmail.com}           
}

\date{Received: date / Accepted: date}

\maketitle

\begin{abstract}
One of the key characteristics of secure quantum communication is quantum secure multiparty computation. In this paper, we propose a quantum secure multiparty summation (QSMS) protocol that can be applied to many complex quantum operations. It is based on the $(t, n)$ threshold approach. We combine the classical and quantum phenomena to make this protocol realistic and secure. Because the current protocols employ the $(n, n)$ threshold approach, which requires all honest players to execute the quantum multiparty summation protocol, they have certain security and efficiency problems. However, we employ a $(t, n)$ threshold approach, which requires the quantum summation protocol to be computed only by $t$ honest players. Our suggested protocol is more economical, practical, and secure than alternative protocols. 

\keywords{Secure Computation \and Multiparty Quantum Computation \and Quantum Experience \and Quantum Communication}
\end{abstract}
\section{Introduction}
Quantum Secure Multiparty Computation (QSMPC) is an advanced cryptographic protocol that leverages the principles of quantum mechanics to enhance security in collaborative computations involving multiple parties. Traditional secure multiparty computation ensures that a group of participants can jointly compute a function over their inputs without revealing the individual inputs to each other. QSMPC goes a step further by incorporating quantum resources, such as entanglement and quantum key distribution (QKD)\cite{sutradhar2021enhanced}, to achieve enhanced security and efficiency. An efficient simulation of QSMPC typically involves combining classical cryptographic techniques with quantum algorithms to mitigate potential vulnerabilities in both realms. The simulation framework must address critical challenges, such as minimizing quantum resource overhead, ensuring robustness against noise in quantum channels, and preserving computational efficiency \cite{song2017t,sutradhar2020efficient,sutradhar2020generalized}. By utilizing quantum gates, entanglement distribution, and secure communication protocols, researchers can simulate QSMPC to validate its practicality for real-world applications \cite{mashhadi2012analysis,mashhadi2012novel,sun2020toward}. Key advantages include stronger security guarantees against quantum adversaries and reduced reliance on computational hardness assumptions \cite{shi2017quantum,sutradhar2021efficient}. Additionally, hybrid classical-quantum methods optimize resource utilization \cite{sutradhar2020hybrid}, making the approach feasible with near-term quantum technologies. Efficient simulations pave the way for deploying QSMPC in sensitive areas \cite{shi2016quantum,shi2016comment,shi2018efficient} like secure voting, financial transactions, and distributed data analysis, where both data privacy and integrity are paramount. These developments highlight the transformative potential of quantum technologies in redefining secure computation paradigms\cite{Qin2018Multidimensional,bao2009threshold,sutradhar2021cost,yang2013secret,mashhadi2017new,hillery1999quantum}.

\section{Preliminaries}
Here, we introduce the Shamir's secret sharing, $QFT$, and $IQFT$, which will be used in our proposed protocol.
\subsection{Shamir's Secret Sharing \cite{shamir1979share}}
This protocol has two phases as discussed below.
\subsubsection{Sharing of Secret}
The dealer creates $n$ shares of the secret using a polynomial $f(x)$ of degree ($t-1$)  and distributes $n$ shares among $n$ participants.
\subsubsection{Reconstruction of Secret}
The threshold number of participants reconstructs the secret as follows.
\begin{equation}\label{equ1}
f(x) = \sum_{v=1}^{t} f(x_v) \prod_{1 \le j \le t, j \neq v} \frac{x_j}{x_j - x_v}
\end{equation}
\subsection{Quantum Fourier Transform ($QFT$) \cite{song2017t,sutradhar2023quantum,sutradhar2021secret,sutradhar20213efficient,sutradhar20244quantum,sutradhar2024smart,challagundla2024privacy,sutradhar2024review,challagundla2024secure,nayaka2024survey}}
The quantum Fourier transform (QFT) is defined as $$QFT: \ket{\alpha}  \rightarrow \frac{1} {\sqrt{d}} \sum_{\beta=0}^{d-1} e^{2\pi i\frac{\alpha}{d}\beta} \ket{\beta}.$$
\subsection{Inverse Quantum Fourier Transform ($IQFT$) \cite{song2017t,sutradhar2024survey,sutradhar2023secure,sutradhar2024svqcp,sutradhar2022privacy,sutradhar20211efficient,sutradhar2024privacy,sutradhar20244privacy,sutradhar2023quantum}}
The inverse quantum Fourier transform ($IQFT$) is defined as $$IQFT: \ket{\beta}  \rightarrow \frac{1} {\sqrt{d}} \sum_{\alpha=0}^{d-1} e^{-2\pi i\frac{\beta}{d}\alpha} \ket{\alpha}.$$
\section{Our Contribution}
In this section, we propose a (t,n) threshold QSMS protocol. Let the dealers $A$ and $B$ have two secrets (for simplicity, we only take two secrets but the secrets can be any number $n$ or more than $n$, where $n$ denotes total no of players) $X$ and $Y$, respectively, and $n$ players want to jointly perform the summation $(S=X+Y)$ without revealing their secrets.  In this protocol, each qualified subset $\mathbb{P}=\{P_1, P_2, \dots, P_t\}$  contains a $k^{th}$ player as an initiator. We assume that $k^{th}$ player is $P_1$, which acts as an initiator. The initiator $P_1$ only contains his share value, nothing else. The process of quantum secure multiparty summation is given as follows.\\
\\
\textbf{Step 1:} $A$ and $B$ choose two distinct $(t-1)$-degree polynomials $f(x)=X + \alpha_1x + \alpha_2x^2 + \dots + \alpha_{t-1}x^{t-1}$ and $g(x)=Y + \beta_1x + \beta_2x^2 + \dots + \beta_{t-1}x^{t-1}$,  $X$ and $Y$ are secrets and the symbol $'+'$ is defined as addition modulo $d$, $d$ is a prime such that $n \le d \le 2n$. The $A$ and $B$ use the Shamir's secret sharing to compute the shares $f(x_i)$ and $g(x_i)$, respectively, which are distributed among $n$ players using an authenticated classical channel. The player $P_i$ only knows the shares $f(x_i)$ and $g(x_i)$, $i= 1, 2, \dots, n$.\\
\\
\textbf{Step 2:} Player $P_i$ computes $h(x_i)= f(x_i) + g(x_i)$, $i= 1, 2, \dots, n,$ and possesses the share $h(x_i)$ only.\\
\\
\textbf{Step 3:} Player $P_u$ computes the shadow $(m_u)$ of the share $h(x_u)$,  $u= 1, 2, \dots, t$, as follows.
\begin{equation}\label{equ4}
m_u  = h(x_u) \prod_{1\leq z\leq t, z\neq u} \frac {x_z} {x_z - x_u} \mod d
\end{equation}
\textbf{Step 4:} Initiator player $P_1$ prepares $t-$particle entangled states as follows.
\begin{equation}
\ket{\Psi_1} = \frac{1} {\sqrt{d}} \sum_{c=0}^{d-1} \ket{c}_1 \ket{c}_2 \dots \ket{c}_t
\end{equation}
Player $P_1$ sends the particle $\ket{c}_u$ to player $P_u$,  $u=2, 3, \dots, t$.\\
\\
\textbf{Step 5:} Each player $P_u$ performs the $QFT$ \cite{Yang2018} on his particle $\ket{c}_u$ as follows:
\begin{equation}\label{equ5}
QFT \ket{c}_1 = \frac{1} {\sqrt{d}} \sum_{a_1=0}^{d-1} e^{2\pi i\frac{c}{d}a_1} \ket{a_1}.
\end{equation}
Each player $P_u$, $(u=1, 2, \dots, t)$, also applies the Pauli operator $U_{m_u, 0}$ on his particle as follows:
\begin{equation}\label{equ6}
U_{m_1, 0} = \sum_{c=0}^{d-1}  \omega^{c.0} \ket{c+m_1} \bra{c}
\end{equation}
After performing the $QFT$ and Pauli operator, the resultant state $\ket{\Psi_2}$ is obtained as follows.
\begin{equation}\label{equ7}
\begin{split}
\ket{\Psi_2} & = U_{m_1, 0} QFT \otimes U_{m_2, 0} QFT \otimes \dots \otimes U_{m_t, 0} QFT \ket{\Psi_1}\\
& = d^{-\frac{t+1}{2}}  \sum_{0 \le a_1, \dots, a_t <d,~ a_1 + ,\dots, + a_t = 0 \mod~d} \ket{a_1+m_1} \ket{a_2+m_2} \dots \ket{a_u+m_u}
\end{split}
\end{equation}
\textbf{Step 6:} Each player $P_u$ performs the measurement operation on his particle $\ket{a_u+m_u}$ in computational basis $\{ \ket{1}, \ket{2}, \dots, \ket{d-1} \}$, and broadcasts his measurement results $a_u+m_u$, where $u=1, 2, \dots, t$.\\
\\
\textbf{Step 7:} Finally, the players in qualified subset 
calculate the summation jointly by summing their results of measurement: $S=\sum_{u=1}^{t} a_u+m_u \mod~d$.
\section{Correctness}
\begin{lemma}
If QFT and Pauli operators are honestly performed by all players in a qualified subset $\mathbb{P}=\{P_1, P_2, \dots, P_t\}$, then they can jointly compute the multiparty quantum summation $(\sum_{u=1}^{t} m_u \mod~d)$ correctly.
\end{lemma}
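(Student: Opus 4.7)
The plan is to follow the state of the $t$ particles through Steps 4--6 and show that the modular sum of the broadcast outcomes is forced to equal $\sum_{u=1}^t m_u \pmod d$. First I would start from the initial entangled state $\ket{\Psi_1}$, apply $QFT^{\otimes t}$, apply the local shifts $\bigotimes_u U_{m_u,0}$, and then establish that the only computational-basis labels $(a_1,\dots,a_t)$ appearing with nonzero amplitude satisfy $a_1 + \cdots + a_t \equiv 0 \pmod d$. Once that constraint is in hand, the conclusion of Step 7 is immediate.

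The heart of the argument is a direct Fourier computation. Applying $QFT$ factor-by-factor to $\ket{\Psi_1} = d^{-1/2} \sum_{c=0}^{d-1} \ket{c}^{\otimes t}$ yields
\begin{equation*}
QFT^{\otimes t}\ket{\Psi_1} = d^{-(t+1)/2} \sum_{c=0}^{d-1} \sum_{a_1,\dots,a_t=0}^{d-1} e^{2\pi i c(a_1+\cdots+a_t)/d} \ket{a_1}\cdots\ket{a_t}.
\end{equation*}
Swapping the order of summation and invoking the character-orthogonality identity $\sum_{c=0}^{d-1} e^{2\pi i c k/d} = d$ when $k \equiv 0 \pmod d$ and $0$ otherwise collapses the inner sum, so only tuples with $a_1+\cdots+a_t \equiv 0 \pmod d$ survive. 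This is the step I expect to be the main technical obstacle: it is essentially standard once the sums are written out carefully, but the bookkeeping of indices and of the normalisation factors has to be done precisely to reproduce the coefficient $d^{-(t+1)/2}$ in Equation~(7).

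Next, the Pauli operator $U_{m_u,0} = \sum_c \ket{c+m_u}\bra{c}$ is a deterministic shift on the computational basis, so $U_{m_u,0}\ket{a_u} = \ket{a_u + m_u}$ with addition mod $d$. Applying $\bigotimes_u U_{m_u,0}$ to the post-$QFT$ state reproduces the form of $\ket{\Psi_2}$ in Equation~(7). The computational-basis measurement in Step 6 then produces a concrete tuple $(a_1+m_1,\dots,a_t+m_t)$ with $a_1+\cdots+a_t \equiv 0 \pmod d$, and broadcasting and summing as prescribed in Step 7 gives
\begin{equation*}
\sum_{u=1}^{t} (a_u + m_u) \;\equiv\; \sum_{u=1}^{t} m_u \pmod d,
\end{equation*}
which is exactly the claimed identity.
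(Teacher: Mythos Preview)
Your proposal is correct and follows essentially the same route as the paper: expand $QFT^{\otimes t}$ on $\ket{\Psi_1}$, use the geometric/character sum $\sum_{c=0}^{d-1}\omega^{ck}=d\,\delta_{k\equiv 0}$ to restrict to tuples with $a_1+\cdots+a_t\equiv 0\pmod d$, apply the local shifts $U_{m_u,0}$, and conclude that the broadcast outcomes sum to $\sum_u m_u\pmod d$. If anything, you state the character-orthogonality step more explicitly than the paper does.
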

\begin{proof}
If QFT and Pauli operators are honestly performed by every player in the qualified subset $\mathbb{P}=\{P_1, P_2, \dots, P_t\}$, the quantum state is obtained as follows.
\begin{equation}
\begin{split}
\ket{\Psi_2} & = U_{m_1, 0} QFT \otimes \dots \otimes U_{m_t, 0} QFT \Big(\frac{1} {\sqrt{d}} \sum_{c=0}^{d-1} \ket{c}_1 \dots \ket{c}_t \Big)\\
&=\frac{1} {\sqrt{d}} \sum_{c=0}^{d-1} U_{m_1, 0} QFT \ket{c}_1 \otimes \dots \otimes U_{m_t, 0} QFT \ket{c}_t\\
&=\frac{1} {\sqrt{d}} \sum_{c=0}^{d-1} \Big( U_{m_1, 0} \frac{1} {\sqrt{d}} \sum_{a_1=0}^{d-1} \omega^{a_1c} \ket{a_1}\Big) \otimes\dots \otimes \Big( U_{m_t, 0} \frac{1} {\sqrt{d}} \sum_{a_1=0}^{d-1} \omega^{a_tc} \ket{a_t}\Big)\\
& = d^{-\frac{t+1}{2}} \sum_{0 \le a_1, \dots, a_t <d} \sum_{c=0}^{d-1} \omega^{(a_1 + \dots + a_t)c} \ket{a_1+m_1} \otimes \dots \otimes\ket{a_t+m_t}\\
& = d^{-\frac{t+1}{2}}s_0d\sum_{0 \le a_1, \dots, a_t <d, a_1+ \dots + a_t=0\mod d} \ket{a_1+m_1}  \otimes \dots \otimes \ket{a_t+m_t}
\end{split}
\end{equation}
Each player $P_u$, $u=1, 2, \dots, t$, performs the measurement operation on his own particle in computational basis $\ket{a_u+m_u}$. The QSMS can be computed after receiving the measurement results of each player $P_u$, $u=1, 2, \dots, t$. The QSMS of secret can be calculated as follows.
\begin{equation}\label{equ0000}
\sum_{u=1}^{t} a_u+m_u \overset{d}{\equiv} \sum_{u=1}^{t} a_u + \sum_{u=1}^{t} m_u \overset{d}{\equiv} \sum_{u=1}^{t} m_u \mod~d
\end{equation}
Thus, the multiparty quantum summation of secrets equals to $\sum_{u=1}^{t} m_u \mod~d$.
\end{proof}
\section{Illustration of Secure Multiparty Quantum Summation}
Here, we use a numerical example to discuss the working of the proposed protocol. Let $A$ and $B$ hold two secrets $2$ and $3$, respectively and they want to perform the summation $S=(2+3)$. $A$ and $B$ choose threshold $(t)=3$, total number of players $(n)=7$, and prime $(d)=11$. Suppose $A$ and $B$ select two different polynomials $f(x)=2 + x + x^2 \mod~11$ and $g(x)=3 + x + x^2 \mod~11$, respectively. They calculate the shares $f(x_i)$ and $g(x_i), i=1,2,...,7$ using the Shamir's secret sharing, and allocate these shares to $7$ players. Each player $P_i, i = 1, 2, \dots, 7$, performs $h(x_i)= f(x_i) + g(x_i) \mod~11$. The calculation of shares $h(x_i)$ is shown in Table~\ref{table126}.
\begin{table}[ht]
\centering
\caption{Share Computation}
\begin{tabular}{|l|l|c|c|c|c|c|c|c|}
\hline
\multicolumn{2}{|l|}{Players} & $P_1$ & $P_2$ & $P_3$  & $P_4$   & $P_5$    & $P_6$      & $P_7$       \\ \hline
\multirow{4}{*}{\rotatebox[origin=c]{90}{~~~~Shares}}  
                        & $f(x_i)$   & 4  & 8  & 3   & 0    & 10     & 0       & 3        \\ \cline{2-9} 
                        & $g(x_i)$   & 5  & 9  & 4   & 1    & 0     & 1       & 4        \\ \cline{2-9} 
                        & $h(x_i)$   & 9  & 6 & 7   & 1    & 10    & 1       & 7        \\ \cline{1-9} 
                       
\end{tabular}
\label{table126}
\end{table}
Each player $P_u$, u= 1, 2, 3, computes the shadow of the shares $m_u$, as $m_1  = 9. \Big(\frac {2} {2 - 1} . \frac {3} {3 - 1} \Big) \mod 11 =5$, $m_2  = 6. \Big(\frac {1} {1 - 2} . \frac {3} {3 - 2} \Big) \mod 11 = 4$, and $m_3  = 7. \Big(\frac {1} {1 - 3} . \frac {2} {2 - 3} \Big) \mod 11 =7$, respectively (using Eq.~\ref{equ4}).
The player $P_1$ now computes $\ket{\Psi_1} = \frac{1} {\sqrt{11}} \sum_{c=0}^{10} \ket{c}_1 \ket{c}_2 \ket{c}_3$ and sends the particle $\ket{c}_u$ to player $P_u, u=2, 3$. Each player $P_u, u=1, 2, 3,$ applies the $QFT$ and  Pauli operator $U_{5, 0}$, $U_{4, 0}$, $U_{7, 0}$ on his particle, respectively, (as per Eq.~\ref{equ7}).
\begin{equation}
\begin{split}
\ket{\Psi_2} & = U_{5, 0} QFT \otimes U_{4, 0} QFT \otimes U_{7, 0} QFT \Big(\frac{1} {\sqrt{11}} \sum_{c=0}^{10} \ket{c}_1 \ket{c}_2 \ket{c}_3 \Big)\\
&=\frac{1} {\sqrt{11}} \sum_{c=0}^{10} U_{5, 0} QFT \ket{c}_1 \otimes U_{4, 0} QFT \ket{c}_2 \otimes  U_{7, 0} QFT \ket{c}_3\\
& = 11r_1 \sum_{0 \le a_1, a_2, a_3 <10,~ a_1 + a_2 + a_3 = 0 \mod~11} \ket{a_1+5} \ket{a_2+4} \ket{a_3+7}
\end{split}
\end{equation}
Each player $P_u, u = 1,2,3,$ performs the measurement operation in computational basis on his particle. The players $P_1$, $P_2$, and $P_3$ broadcast the measurement results $a_1+5$, $a_2+4$, and $a_3+7$, respectively. Finally, they get the summation by summing the results of measurement as follows:
$$ a_1+5+a_2+4+a_3+7 \overset{11}{\equiv} a_1+a_2+a_3+16\overset{11}{\equiv} 16 \mod 11=5$$
\section{Simulation Results}
We simulate the proposed protocol using the IBM real quantum processor, which is available at T.J.Watson lab, USA. The Hadamard gate is taken as the $QFT$ in this circuit diagram of QSMS. On his particle, the player $P_u$ applies the $QFT$ and also performs the Pauli operator on his particle. Then, each player $P_u$ performs measurement operations on his own particle, and broadcasts the measurement result. Finally, by summing their measurement results, the players jointly calculate the QSMS. The privacy of this protocol is guaranteed until a certain number of players disclose their shares. We have simulated this circuit of QSMS with $3$ players, $5$ qubits, and $8192$ number of average shots. Initially, the player $P_u$, $u=1,2,3$ performs the $QFT$ on his particle $\ket{c}_u$ and also executes the Pauli operator on particle $\ket{c}_u$. Then, each player $P_u, u = 1,2,3,$ executes the measurement operation in computational basis on his particle. The players $P_1$, $P_2$, and $P_3$ broadcast the measurement results $a_1+5$, $a_2+4$, and $a_3+7$, respectively. Finally, they get the summation of $2$ and $3$ by adding the measurement results as follows: $$a_1+5+a_2+4+a_3+7=16 \mod 11=5.$$ The simulation result of the proposed summation protocol for $3$ players, $5$ qubits, and $8192$ number of average shots. The state $101$ (i.e., binary representation of $5$) is calculated efficiently. 
\section{Discussion}
Here, we address the security and performance analysis based on some properties of the proposed QSMS protocol.
\subsection*{Security Analysis}
In this section, we analyze the security of QSMS protocol based on the intercept-resend, entangle-measure, intercept, collective, coherent, and collusion attacks.\\
\\
\textbf{Intercept-resend attack:} Suppose an attacker Mallory intercepts the particle $\ket{c}_u$. It measures the quantum particle $\ket{c}_u$ in the computational basis to get the useful data about the share's shadow ($m_u$). Mallory produces the clone quantum particle $\ket{\bar{c}}_u$ and resends this clone particle to  player $P_u$, $u=2,3,\dots t$. If Mallory applies this method to attack, then it can get $c$ accurately with  probability $\frac{1}{d}$. But, from this attack, Mallory cannot get any useful data about the share's shadow $m_u$, because the intercepted particle $\ket{c}_u$ does not contain any useful data about the share's shadow $m_u$.\\
\\
\textbf{Entangle-Measure attack:} After the intercept attack,  Mallory performs the complex entangle-measure attack on the entangled quantum particle $\ket{c}_u$. In this attack, Mallory performs the measurement operation on the intercepted entangled quantum particle $\ket{c}_u$ in the computational basis to get the useful data about the share's shadow $m_u$. If Mallory applies the entangle-measure attack, then it can get $c$ accurately with probability $\frac{1}{d}$. But, from this attack, Mallory cannot get useful data about the share's shadow $m_u$, because the intercepted entangled quantum particle $\ket{c}_u$ does not contain any useful data about the share's shadow $m_u$.\\
\\
\textbf{Intercept attack:} Suppose Mallory intercepts the particle $\ket{c}_u$ and measures the quantum particle $\ket{c}_u$ in the computational basis to reveal the useful data about the share's shadow $m_u$. If Mallory measures the quantum particle $\ket{c}_u$ in the computational basis, then it can get $c$ correctly with probability $\frac{1}{d}$. But, from the measurement result $c$, it cannot get any useful data about the share's shadow $m_u$, because the intercepted particle $\ket{c}_u$ does not carry any useful data about the share's shadow $m_u$.\\
\\
\textbf{Collective attack:}
In a collective attack, Mallory prepares an autonomous ancillary particle to communicate with each qudit to get the shadow of share and they perform the joint measurement operation on every ancillary qudit. Suppose Mallory communicates with every qudit of all players by preparing an autonomous ancillary particle $\ket{e}$. After successful interaction, Mallory gets the particle $\ket{o}_x$. Then, Mallory wants to know the shadow of share by performing a  computational basis $\{ \ket{1}, \ket{2}, \dots, \ket{d-1} \}$ joint measurement operation. Mallory cannot get any useful data about the share's shadow from this joint measurement operation because $\ket{o}_x$ does not contain any useful data about the share's shadow.\\
\\
\textbf{Coherent attack:}
In this attack, Mallory prepares an autonomous ancillary particle $\ket{c}$ to communicate with the qudits of each player. After interacting, Mallory gets each player's particle $\ket{o}_x$ and performs a joint measurement operation on all players particle $c$ in computational basis $\{ \ket{1}, \ket{2}, \dots, \ket{d-1} \}$. Mallory only gets $o$ from the joint measurement result of particle $\ket{o}_x$ with probability $\frac{1}{d}$. But, the joint measurement result $o$ does not contain any useful data about the share's shadow. From this attack, Mallory only gets the interacting particle $\ket{o}_x$, but it cannot learn any useful data about the share's shadow.\\
\\
\textbf{Collusion attack:} In this protocol, each player $P_u$ performs the measurement on his own particle $\ket{a_u+m_u}$ and broadcasts his result of the measurement $a_u+m_u$,  $u=1, 2, \dots, t$. From this broadcast, other players cannot get any useful data about the share's shadow $m_u$. If some rational players $P_{l-1}$ and $P_{l+1}$ jointly want to get the  data about the share's shadow but they cannot get any useful data about the share's shadow $m_u$ because the initiator $P_1$ transmits only particles $\ket{c}_u$ to all other players and unfortunately $\ket{c}_u$ does not contain any useful data about the share's shadow $m_u$.
\section{Conclusion}
We have examined a $(t, n)$ threshold QSMS protocol based on secret sharing in this paper. If a certain $t$ number of players are honest, this protocol can be carried out effectively. Because it uses linear communication and secret-by-secret computation, it is both efficient and safe. Because the share of secrets is calculated using linear secret sharing, it can also calculate the QSMS if there are more secrets than players. Because we have effectively simulated this protocol using an IBM quantum computer that yields efficient results after increasing the number of shots, this QSMS protocol is more realistic than the current multiparty quantum summation protocols. 

\begin{thebibliography}{10}

\bibitem{sutradhar2021enhanced}
Kartick Sutradhar and Hari Om.
\newblock Enhanced (t, n) threshold d-level quantum secret sharing.
\newblock {\em Scientific Reports}, 11(1):17083, 2021.

\bibitem{song2017t}
Xiu-Li Song, Yan-Bing Liu, Hong-Yao Deng, and Yong-Gang Xiao.
\newblock (t, n) threshold d-level quantum secret sharing.
\newblock {\em Scientific reports}, 7(1):6366, 2017.

\bibitem{sutradhar2020efficient}
Kartick Sutradhar and Hari Om.
\newblock Efficient quantum secret sharing without a trusted player.
\newblock {\em Quantum Information Processing}, 19(2):1--15, 2020.

\bibitem{sutradhar2020generalized}
Kartick Sutradhar and Hari Om.
\newblock A generalized quantum protocol for secure multiparty summation.
\newblock {\em IEEE Transactions on Circuits and Systems II: Express Briefs}, 67(12):2978--2982, 2020.

\bibitem{mashhadi2012analysis}
Samaneh Mashhadi.
\newblock Analysis of frame attack on hsu et al.’s non-repudiable threshold multi-proxy multi-signature scheme with shared verification.
\newblock {\em Scientia Iranica}, 19(3):674--679, 2012.

\bibitem{mashhadi2012novel}
Samaneh Mashhadi.
\newblock A novel secure self proxy signature scheme.
\newblock {\em IJ Network Security}, 14(1):22--26, 2012.

\bibitem{sun2020toward}
Zhen Sun, Liyuan Song, Qin Huang, Liuguo Yin, Guilu Long, Jianhua Lu, and Lajos Hanzo.
\newblock Toward practical quantum secure direct communication: A quantum-memory-free protocol and code design.
\newblock {\em IEEE Transactions on Communications}, 68(9):5778--5792, 2020.

\bibitem{shi2017quantum}
Run-Hua Shi and Shun Zhang.
\newblock Quantum solution to a class of two-party private summation problems.
\newblock {\em Quantum Information Processing}, 16(9):1--9, 2017.

\bibitem{sutradhar2021efficient}
Kartick Sutradhar and Hari Om.
\newblock An efficient simulation for quantum secure multiparty computation.
\newblock {\em Scientific Reports}, 11(1):1--9, 2021.

\bibitem{sutradhar2020hybrid}
Kartick Sutradhar and Hari Om.
\newblock Hybrid quantum protocols for secure multiparty summation and multiplication.
\newblock {\em Scientific Reports}, 10(1):1--9, 2020.

\bibitem{shi2016quantum}
Run-hua Shi, Yi~Mu, Hong Zhong, Shun Zhang, and Jie Cui.
\newblock Quantum private set intersection cardinality and its application to anonymous authentication.
\newblock {\em Information Sciences}, 370:147--158, 2016.

\bibitem{shi2016comment}
Run-hua Shi, Yi~Mu, Hong Zhong, and Shun Zhang.
\newblock Comment on “secure quantum private information retrieval using phase-encoded queries”.
\newblock {\em Physical Review A}, 94(6):066301, 2016.

\bibitem{shi2018efficient}
Run-Hua Shi.
\newblock Efficient quantum protocol for private set intersection cardinality.
\newblock {\em IEEE Access}, 6:73102--73109, 2018.

\bibitem{Qin2018Multidimensional}
Huawang Qin, Raylin Tso, and Yuewei Dai.
\newblock Multi-dimensional quantum state sharing based on quantum fourier transform.
\newblock {\em Quantum Information Processing}, 17(3):48, 2018.

\bibitem{bao2009threshold}
Li~Bao-Kui, Yang Yu-Guang, and Wen Qiao-Yan.
\newblock Threshold quantum secret sharing of secure direct communication.
\newblock {\em Chinese Physics Letters}, 26(1):010302, 2009.

\bibitem{sutradhar2021cost}
Kartick Sutradhar and Hari Om.
\newblock A cost-effective quantum protocol for secure multi-party multiplication.
\newblock {\em Quantum Information Processing}, 20:1--10, 2021.

\bibitem{yang2013secret}
Wei Yang, Liusheng Huang, Runhua Shi, and Libao He.
\newblock Secret sharing based on quantum fourier transform.
\newblock {\em Quantum information processing}, 12(7):2465--2474, 2013.

\bibitem{mashhadi2017new}
Samaneh Mashhadi.
\newblock New multi-stage secret sharing in the standard model.
\newblock {\em Information Processing Letters}, 127:43--48, 2017.

\bibitem{hillery1999quantum}
Mark Hillery, Vladim{\'\i}r Bu{\v{z}}ek, and Andr{\'e} Berthiaume.
\newblock Quantum secret sharing.
\newblock {\em Physical Review A}, 59(3):1829, 1999.

\bibitem{eastlake2001us}
Donald Eastlake and Paul Jones.
\newblock Us secure hash algorithm 1 (sha1), 2001.

\bibitem{CommentKao2018}
Shih-Hung Kao and Tzonelih Hwang.
\newblock Comment on (t, n) threshold d-level quantum secret sharing.
\newblock {\em arXiv preprint arXiv:1803.00216}, 2018.

\bibitem{mashhadi2020improvement}
Samaneh Mashhadi.
\newblock Improvement of a (t, n) threshold d- level quantum secret sharing scheme.
\newblock {\em Journal of Applied Security Research}, pages 1--12, 2020.

\bibitem{shamir1979share}
Adi Shamir.
\newblock How to share a secret.
\newblock {\em Communications of the ACM}, 22(11):612--613, 1979.

\bibitem{sutradhar2023quantum}
Kartick Sutradhar, Ranjitha Venkatesh, and Priyanka Venkatesh.
\newblock Quantum internet of things for smart healthcare.
\newblock In {\em Learning Techniques for the Internet of Things}, pages 261--285. Springer, 2023.

\bibitem{sutradhar2021secret}
Kartick Sutradhar and Hari Om.
\newblock Secret sharing based multiparty quantum computation for multiplication.
\newblock {\em International Journal of Theoretical Physics}, 60(9):3417--3425, 2021.

\bibitem{sutradhar20213efficient}
Kartick Sutradhar and Hari Om.
\newblock Efficient cryptographic protocol for sorting with data-oblivious.
\newblock In {\em 2021 2nd International Conference for Emerging Technology (INCET)}, pages 1--6. IEEE, 2021.

\bibitem{sutradhar20244quantum}
Kartick Sutradhar, Ranjitha Venkatesh, and Priyanka Venkatesh.
\newblock Quantum blockchain-based healthcare: Merging frontiers for secure and efficient data management.
\newblock In {\em Healthcare Services in the Metaverse}, pages 190--207. CRC Press, 2024.

\bibitem{sutradhar2024smart}
Kartick Sutradhar, Ranjitha Venkatesh, and Priyanka Venkatesh.
\newblock Smart healthcare services employing quantum internet of things on metaverse.
\newblock In {\em Healthcare Services in the Metaverse}, pages 170--189. CRC Press, 2024.

\bibitem{challagundla2024privacy}
Koushik Challagundla and Kartick Sutradhar.
\newblock A privacy-preserving quantum authentication for vehicular communication.
\newblock {\em Quantum Information Processing}, 23(11):1--24, 2024.

\bibitem{sutradhar2024review}
Kartick Sutradhar, Ranjitha Venkatesh, and Priyanka Venkatesh.
\newblock A review on smart healthcare employing quantum internet of things.
\newblock {\em IEEE Engineering Management Review}, 2024.

\bibitem{challagundla2024secure}
Koushik Challagundla and Kartick Sutradhar.
\newblock A secure quantum protocol for vehicular ad hoc networks.
\newblock In {\em 2024 15th International Conference on Computing Communication and Networking Technologies (ICCCNT)}, pages 1--6. IEEE, 2024.

\bibitem{nayaka2024survey}
Parikshith Nayaka Sheetakallu~Krishnaiah, Dayanand~Lal Narayan, and Kartick Sutradhar.
\newblock A survey on secure metadata of agile software development process using blockchain technology.
\newblock {\em Security and Privacy}, 7(2):e342, 2024.

\bibitem{sutradhar2024survey}
Kartick Sutradhar, Beena~G Pillai, Ruhul Amin, and Dayanand~Lal Narayan.
\newblock A survey on privacy-preserving authentication protocols for secure vehicular communication.
\newblock {\em Computer Communications}, 2024.

\bibitem{sutradhar2023secure}
Kartick Sutradhar.
\newblock Secure multiparty quantum aggregating protocol.
\newblock {\em Quantum Inf. Comput.}, 23(3\&4):245--256, 2023.

\bibitem{sutradhar2024svqcp}
Kartick Sutradhar and Ranjitha Venkatesh.
\newblock Svqcp: A secure vehicular quantum communication protocol.
\newblock {\em IEEE Transactions on Network Science and Engineering}, 2024.

\bibitem{sutradhar2022privacy}
Kartick Sutradhar and Hari Om.
\newblock A privacy-preserving comparison protocol.
\newblock {\em IEEE Transactions on Computers}, 72(6):1815--1821, 2022.

\bibitem{sutradhar20211efficient}
Kartick Sutradhar and Hari Om.
\newblock An efficient simulation of quantum secret sharing.
\newblock {\em arXiv preprint arXiv:2103.11206}, 2021.

\bibitem{sutradhar2024privacy}
Kartick Sutradhar and Ranjitha Venkatesh.
\newblock A privacy preserving quantum aggregating technique.
\newblock {\em Quantum Information Processing}, 23(4):124, 2024.

\bibitem{sutradhar20244privacy}
Kartick Sutradhar and Ranjitha Venkatesh.
\newblock A privacy preserving quantum aggregating technique with simulation.
\newblock {\em Physica Scripta}, 99(5):055105, 2024.

\bibitem{shi2016secure}
Run-hua Shi, Yi~Mu, Hong Zhong, Jie Cui, and Shun Zhang.
\newblock Secure multiparty quantum computation for summation and multiplication.
\newblock {\em Scientific reports}, 6(1):1--9, 2016.

\end{thebibliography}

\end{document}